\newtheorem{prop}{Proposition}
\begin{document}
%
\title{\huge Backscatter Sensors Communication for 6G Low-powered NOMA-enabled IoT Networks under Imperfect SIC}
\author{Manzoor Ahmed, Wali Ullah Khan, Asim Ihsan, Xingwang Li, Jianbo Li, and Theodoros A. Tsiftsis\thanks{Manzoor Ahmed and J. Li are with the College of Computer Science and Technology, Qingdao University, Qingdao 266071, China. (emails: manzoor.achakzai@gmail.com, lijianbo@qdu.edu.cn).

Wali Ullah Khan is with the Interdisciplinary Centre for Security, Reliability and Trust (SnT), University of Luxembourg, 1855 Luxembourg City, Luxembourg (Emails: waliullah.khan@uni.lu, waliullahkhan30@gmail.com).

Asim Ihsan is with the with Department of Information and Communication Engineering, Shanghai Jiao Tong University, Shanghai 200240, China. (email:
ihsanasim@sjtu.edu.cn).
Xingwang Li is with the School of Physics and Electronic Information Engineering, Henan Polytechnic University, Jiaozuo, China (email: lixingwangbupt@gmail.com).

Theodoros A. Tsiftsis is with School of Electrical and Information Engineering, Jinan University, Zhuhai
519070, China (email: theodoros.tsiftsis@gmail.com).

}\vspace{-0.6cm}}%

\markboth{Submitted to IEEE}%
{Shell \MakeLowercase{\textit{et al.}}: Bare Demo of IEEEtran.cls for IEEE Journals} 

\maketitle

\begin{abstract}
The combination of non-orthogonal multiple access (NOMA) using power-domain with backscatter sensor communication (BSC) is expected to connect a large-scale Internet of things (IoT) devices in future sixth-generation (6G) era. In this paper, we introduce a BSC in multi-cell IoT network, where a source in each cell transmits superimposed signal to its associated IoT devices using NOMA. The backscatter sensor tag (BST) also transmit data towards IoT devices by reflecting and modulating the superimposed signal of the source. A new optimization framework is provided that simultaneously optimizes the total power of each source, power allocation coefficient of IoT devices and reflection coefficient of BST under imperfect successive interference cancellation decoding. The objective of this work is to maximize the total energy efficiency of IoT network subject to quality of services of each IoT device. The problem is first transformed using the Dinkelbach method and then decoupled into two subproblems. The Karush-Kuhn-Tucker conditions and Lagrangian dual method are employed to obtain the efficient solutions. In addition, we also present the conventional NOMA network without BSC as a benchmark framework. Simulation results unveil the advantage of our considered NOMA BSC networks over the conventional NOMA network.
\end{abstract}

\begin{IEEEkeywords}
Sixth-generation (6G), backscatter sensor communication (BSC), energy efficiency, Internet of things (IoT), non-orthogonal multiple access (NOMA).
\end{IEEEkeywords}

\IEEEpeerreviewmaketitle

\section{Introduction}
In the last couple of years, Internet-of-things (IoT) has been emerged as a new technological innovation in a wide range of applications such as smart factories, smart cities, smart homes, smart hospitals, autonomous vehicles, and so on \cite{9261963,liu2019next}. The IoT is expected to connect billions of sensor devices in the future sixth-generation (6G) systems \cite{9468352}, which would require the efficient utilization of existing spectrum resources \cite{9516696,2021726}. However, one of the key challenges would be energy issues especially for those systems where the battery replacement of sensor devices can be very costly \cite{8933559}. In particular, the sensor devices  which are hidden in walls and appliances or deployed in radioactive areas and pressurized pipes, making battery replacement difficult if not possible \cite{8861078}. In such circumstances, ambient energy harvesting is a highly desirable approach to maintain the life of sensor devices for a long period \cite{8253544,khan2021energy}. It is important to mention here that ambient energy can sufficiently power sensor devices due to their low energy consumption. In this regard, a promising solution is Backscatter communication (BC) \cite{ihsan2021energy}. BC allows the sensor devices to transmit data by reflecting and modulating the existed radio frequency signal \cite{8368232}.  
\subsection{Technical Literature Review}
Recently, power-domain non-orthogonal multiple access (NOMA) has gained significant importance due to its high spectral efficiency and massive connectivity \cite{9154358,9479745}. Compared to orthogonal multiple access (OMA) techniques, NOMA supports multiple IoT devices over the same spectrum/time resources which can be accomplished through two techniques, i.e., superposition coding at transmitter side and successive interference cancellation (SIC) at receiving side \cite{7842433,ali2022fair}. Various research works on backscatter communication in traditional OMA networks have been studied in literature. For example, Guo {\em et al.} \cite{8692391} have provided the efficient power allocation approach for cooperative BC to investigate the achievable rate of the system. The authors of \cite{9051982} computed a closed-form solution for the outage probability (OP) of BC. In \cite{jameel2019simultaneous}, the authors derived a closed-form expression for the OP of a BC system over Rayleigh fading channels. They also investigated the trade-off between harvested energy and data rate through power splitting factor. Qian {\em et al.} \cite{8423609} calculated a closed-form expression for the symbol-error rate and designed an efficient multi-level energy detector for BC system. The authors of \cite{8093703} investigated an optimization problem for throughput maximization of BC through calculating the optimal reflection coefficient (RC) and the trade-off between active and sleep state. Jameel {\em et al.} \cite{9129364} exploited Q-learning approach to improve the achievable data rate while the constraint on delay is taken into account. In addition, Li {\em et al.} \cite{9363336} investigated security and reliability of BC by through calculating the OP and intercept probability (IP) of the system. Recently, the performance of BC has been investigated using reinforcement learning techniques. The authors of \cite{9024401,9162720} have provided intelligent power allocation algorithms to improve the performance of BC systems. 

The integration of BC in NOMA has recently been studied in literature. For example, in \cite{le2019outage}, the expression of OP has been derived in NOMA BC network where a source is equipped with multi-antenna scenario. Zhang {\em et al.} \cite{8636518} have derived a closed-form expression for the OP and ergodic capacity in NOMA BC symbiotic radio systems. The work of \cite{9131891} has studied the security issues of NOMA BC network. Khan {\em et al.} \cite{9345447} have considered NOMA BC in vehicle-to-everything network to maximize the sum capacity of the system. To improve the spectrum management and network capacity, Liao {\em et al.} \cite{8962090} have studied resource allocation problem in full duplex NOMA BC networks. The work of \cite{8439079} has improved the average successful decoding bit by efficient RC selection criteria in NOMA BC network. In similar study, Farajzadeh {\em et al.} \cite{8761125} have optimized unmanned aerial vehicle altitude and maximize the successful decoded bit rate of NOMA BC network. Yang {\em et al.} \cite{8851217} have optimized the time and RC of BC to maximize the system minimum throughput. Moreover, the work of \cite{8877102} has investigated the OP and the throughput of NOMA BC system. Besides, Li {\em et al.} \cite{li2019secure} have studied the physical layer security of multiple-input single-output NOMA BC network. In \cite{9223730}, the authors have proposed an optimization problem of transmit power and RC for BC to maximize the energy efficiency (EE) of the system. To investigate the security and reliability of NOMA BC system, the authors of \cite{9319204} have investigated the OP and IP under channel estimation error, imperfect SIC and residual hardware impairment. In addition, the joint optimization of power and RC under imperfect SIC was solved in \cite{9328505} to maximize the sum rate of NOMA BC system.

\subsection{Motivation and Contributions}
The above-existed literature \cite{8692391,9051982,jameel2019simultaneous,8423609,8093703,9129364,9363336,9024401,9162720,le2019outage,8636518,9131891,9345447,8962090,8439079,8761125,8851217,8877102,li2019secure,9223730} considers perfect SIC at the receiver side which is impractical in real systems. Of course, a decoding error can occur during the SIC process at receiver side such that the interference of other devices cannot be removed. This will result in significant degradation of the system performance. Besides that, most of the research works consider only single-cell and two-user scenarios. Generally, a network is consists of different cells having various sizes. These cells normally share the same spectrum resources to enhance the spectral efficiency, result in causing inter-cell interference to each other. Moreover, the works in \cite{9319204,9328505} consider imperfect SIC in the single-cell system but their objectives were to improve the sum-capacity and physical layer security. Based on the above observations, there is a need to investigate a system performance with multi-cell, considering inter-cell interference and imperfect SIC decoding.  Thus, the problem that jointly optimizes the total power budget of source, power allocation coefficient (PAC) of IoT devices, and RC of backscatter tag in each cell to investigate the EE of NOMA BC in multi-cell network under imperfect SIC has not yet been investigated, to the best of our knowledge. To bridge this gap, this work aims at proposing a new optimization approach for maximizing the system EE of the multi-cell NOMA backscatter sensors communication (BSC) network under imperfect SIC decoding. Dinkelback method is first adopted to convert the objective of EE from the fractional form into a subtractive form. The converted problem is divided into subproblems and closed-form solutions are then derived based on dual method and Karush-Kuhn-Tucker (KKT) conditions. Simulation results show the benefit of our multi-cell NOMA BSC scheme compared to the benchmark multi-cell NOMA scheme in terms of system total EE. The main contributions of this paper are summarized as follow:

\begin{enumerate}
\item A new optimization framework for a multi-cell IoT network is considered, where a source in each cell transmits a superimposed signal to its serving IoT devices using NOMA protocol. A backscatter sensor tag (BST) in each cell also transmits data symbols towards nearby IoT devices by reflecting and modulating the superimposed signal of the source node. The objective is to maximize the total achievable EE of BSC network under im-SIC decoding. We simultaneously optimize the RC of BST, PAC of IoT devices, and total power budget of the source in each cell subject to the quality of services of IoT devices.
\item The optimization problem to maximize the total EE is formulated as a non-convex which is very complex and hard to be solved. Hence, the Dinkelbach method is first adopted to the original problem to convert the objective of EE from the fractional form into a subtractive form. The converted problem is then divided into two subproblems, i.e., power optimization at source and RC at BST in each cell. Next, we prove the RC subproblem as concave and exploit KKT conditions to obtain an efficient solution. Similarly, we prove that the power allocation subproblem is concave and solve it using the Lagrangian dual method. 
\item We also investigate the same model without BSC (also known as pure NOMA IoT network without backscattering) and set it as the benchmark framework. The numerical results for the proposed framework are corroborated by using Monte Carlo simulation which demonstrates the advantage of NOMA BSC network over the conventional NOMA IoT network without BSC. In addition, the proposed algorithm is less complex and converges after few iterations.
\end{enumerate}
The remaining of this paper is structured as follows: The system model and problem formulation are provided in Section II. The EE maximization solution is proposed in Section III. Simulation results and discussion is presented in Section IV followed by the concluding remarks in Section V. 
\begin{figure*}[!t]
\centering
\includegraphics [width=0.60\textwidth]{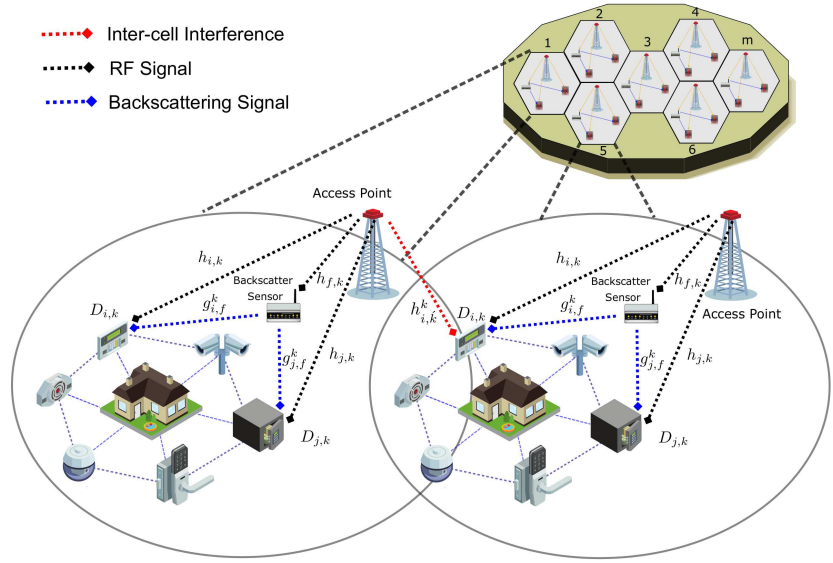}
\caption{Illustration of system model.}
\label{blocky}
\end{figure*}
\section{System Model and Problem Formulation}
We consider a multi-cell BSC network as shown in Fig. \ref{blocky}, wherein each cell, a source (denoted as $\mathcal S$) communicates with two downlink IoT devices ($\mathcal D_{i,k}$ and $\mathcal D_{j,k}$) using NOMA protocol\footnote{This work considers two IoT devices in each cell, however, it can be easily extended to the multi-user scenario. For instance, if the region of each cell is partitioned into multiple clusters and each cluster consists of two IoT devices. In such a case, NOMA is among IoT devices in the same cluster and OMA can be utilized between different clusters \cite{9219112}.}. The network also consists of $F$ uplink BSTs where it set can be denoted as $f=\{1,2,3,\dots F\}$. A BST in each cell also receives the downlink superimposed signal from $\mathcal S$, uses it to modulate information, and then reflects it towards IoT devices in the uplink direction, where the IoT devices also act as readers. The set of cells can be denoted as $K$ such as $k=\{1,2,3,\dots K\}$, where $k$ represents source $\mathcal S_k$. We assume that: 1) All the transmitters and receivers are using single antenna for communication; 2) all the sources reuse the same spectrum/time resources; 3) the channel state information of IoT devices in each cell is available at the source \cite{9261140}; 4) a decoding error can occur during the SIC process at receiver side such that the interference of other devices cannot be removed. Therefore, we consider SIC with decoding error. A superimposed signal $x_k$ transmitted by source $\mathcal S_k$ to $\mathcal D_{i,k}$ and $\mathcal D_{j,k}$ can be expressed as: 
\begin{align}
x_k=\sqrt{P_k\varLambda_{i,k}}x_{i,k}+\sqrt{P_k\varLambda_{j,k}}x_{j,k}, \label{1}
\end{align}
where $P_k$ is the transmit power of $\mathcal S_k$, $\varLambda_{i,k}$ and $\varLambda_{j,k}$ denote the PAC of $\mathcal S_k$. $x_{i,k}$ and $x_{j,k}$ are the unit power data symbols of $\mathcal D_{i,k}$ and $\mathcal D_{j,k}$ from $\mathcal S_k$. 
Meanwhile, the BST denoted as $\mathcal B_{f,k}$ also receives $x_k$ from $\mathcal S_k$, reflect it towards $\mathcal D_{i,k}$ and $\mathcal D_{j,k}$ by adding data symbol $w(t)$ such that $\mathbb E[|w(t)|^2]=1$, where $\mathbb E[.]$ represents the expectation operation. Therefore, $\mathcal D_{i,k}$ and $\mathcal D_{j,k}$ receive signals from both $\mathcal S_k$ and $\mathcal B_{f,k}$. Following the work in \cite{8540884}, If the channel from $\mathcal S_k$ to $\mathcal D_{i,k}$ and $D_{j,k}$ is modeled as $h_{i,k}=\bar{h}_{i,k}d^{-\varrho/2}_{i,k}$ and $h_{j,k}=\bar{h}_{j,k}d^{-\varrho/2}_{j,k}$, where $\bar{h}_{\varsigma,k}\sim\mathcal{CN}(0,1)$, $\varsigma\in\{i,j\}$ are the coefficient of Rayleigh fading, $d_{\varsigma,k}$ is the distance from $\mathcal S_k$ to $\mathcal D{i,k}$ and $\mathcal D_{\varsigma,k}$ and $\varrho$ shoes the path loss exponent. Then, the received signal of $\mathcal D_{i,k}$ and $\mathcal D_{j,k}$ can be written as:
\begin{align}
y_{i,k}&=\sqrt{h_{i,k}}x_k+\sqrt{\varPhi_{f,k}g^k_{i,f}(h_{f,k}}x_k)w(t)\nonumber\\&+\sum\limits_{k'=1, k'\neq k}^K\sqrt{P_{k'}h^k_{i,k'}}x_{k'}+\varpi_{i,k},\label{2}\\
y_{j,k}&=\sqrt{h_{j,k}}x_k+\sqrt{\varPhi_{f,k}g^k_{j,f}(h^k_{j,f}}x_k)w(t)\nonumber\\&+\sum\limits_{k'=1, k'\neq k}^K\sqrt{P_{k'}h^k_{j,k'}}x_{k'}+\varpi_{j,k},\label{3}
\end{align}
where in both (\ref{2}) and (\ref{3}), the first segment refer to the desired signal of $\mathcal S_k$, the second segment is the reflected signal of $\mathcal B_{f,k}$ and the third segment represents the inter-cell interference of neighboring cells. Further, $h_{f,k}$ is the channel gain between $\mathcal B_{f,k}$ and $\mathcal S_k$, $\varPhi_{f,k}$ refers to the RC of $\mathcal B_{f,k}$. Further, $g^k_{i,f}$ and $g^k_{j,f}$ denote the channel gains from $\mathcal B_{f,k}$ to $\mathcal D_{i,k}$ and $\mathcal D_{j,k}$. In addition, $P_{k'}$ is the interference power from $\mathcal S_{k'}$, $h^k_{i,k'}$ and $h^k_{j,k'}$ are the channel gains from $\mathcal S_{k'}$ to $\mathcal D_{i,k}$ and $\mathcal D_{j,k}$. Moreover, $\varpi_{i,k}$ and $\varpi_{j,k}$ are the additive white Gaussian noises (AWGN) with zero mean and $\sigma^2$ variance. According to the NOMA, $\mathcal D_{i,k}$ can decodes the signals $x_{i,k}$ and $w(t)$ by applying the SIC technique. In contrary, $\mathcal D_{j,k}$ cannot apply SIC and decodes the signal $x_{j,k}$ with interference. 

By considering the detecting and decoding sensitivity of receiver, $\mathcal{BS}_{k}$ a decoding error can occur during the SIC process at $\mathcal{D}_{i,k}$ such that the interference of $\mathcal{D}_{j,k}$ cannot be removed. Therefore, the received signal to interference plus noise ratio (SINR) of $\mathcal D_{i,k}$ when subtracting the signal of $\mathcal D_{j,k}$ can be given as:
\begin{align}
\gamma^k_{i\rightarrow j}=\frac{P_k\varLambda_{j,k}|h_{i,k}|^2+\varPhi_{f,k}|h_{f,k}|^2|g^k_{i,f}|^2}{P_k\varLambda_{i,k}(|h_{i,k}|^2+\varPhi_{f,k}|h_{f,k}|^2|g^k_{i,f}|^2)+\varDelta^k_{j,k'}+\sigma^2}, \label{4}
\end{align}
where $\varDelta^k_{j,k'}=|h^k_{j,k'}|^2\sum_{k'=1}^{K}P_{k'}$
is the inter-cell interference due to the co-channel deployment. The SINR at $\mathcal D_{i,k}$ to decode its own signal can be stated as:
\begin{align}
\gamma^k_{i\rightarrow i}=\frac{P_k\varLambda_{i,k}(|h_{i,k}|^2+\varPhi_{f,k}G_{i,k})}{P_k\varLambda_{j,k}|h_{i,k}|^2\beta+\varDelta^k_{i,k'}+\sigma^2}, \label{5}
\end{align}
where ${G_{i,k} = |h_{f,k}|^2|g^k_{i,f}|^2 }$. $\beta$ represents the imperfect SIC parameter which is given as $\beta=\mathbb E[|x_{i,k}-\tilde{x}_{i,k}|^2]$, where $x_{i,k}-\tilde{x}_{i,k}$ stands for the difference between the original and the estimated signals. The corresponding rate of $D_{i,k}$ can be written as
$R_{i,k}=\log_2(1+\gamma^k_{i\rightarrow i})$. The SINR at $\mathcal D_{j,k}$ to decode $x_{j,k}$ can be written as:
\begin{align}
\gamma^k_{j\rightarrow j}=\frac{P_k\varLambda_{j,k}(|h_{j,k}|^2+\varPhi_{f,k}G_{j,k})}{P_k\varLambda_{i,k}(|h_{j,k}|^2+\varPhi_{f,k}G_{j,k})+\varDelta^k_{j,k'}+\sigma^2}, \label{7}
\end{align}
where ${G_{j,k} = |h_{f,k}|^2|g^k_{j,f}|^2 }$. Thus, its corresponding data rate is can be written as
$R_{j,k}=\log_2(1+\gamma^k_{j\rightarrow j})$.

The objective of this work is to maximize the total EE of multi-cell NOMA BSC network. The total EE is given by
\begin{align}
EE=\sum\limits^K_{k=1}\left(\frac{R_k}{P_k\varLambda_{i,k}+P_k\varLambda_{j,k}+p_c}\right), \label{9}
\end{align}
where $R_k=R_{i,k}+R_{j,k}$ is the sum rate of $\mathcal S_k$ while the circuit power is represented by $p_c$. The EE of the system can be maximized through the efficient allocation of transmit power of $\mathcal S_k$, PAC of IoT devices, and the RC of BST in each cell. In addition, we also aim to ensure the minimum data rate of IoT devices in each cell. Mathematically, a joint optimization problem (P) is to maximize the total
EE of multi-cell NOMA BSC network can be formulated as:
\begin{alignat}{2}
&\text{(P)}\ \quad \underset{{(\varLambda_{i,k},\varLambda_{j,k},\varPhi_{f,k})}}{\text{max}} EE \label{10}\\
s.t.\quad&\text{C1}: \ P_k\varLambda_{i,k}\left(|h_{i,k}|^2+\varPhi_{f,k}G_{i,k}\right)\geq\left(2^{R_{min}}-1\right)\nonumber\\ & \times\left(|h_{i,k}|^2P_k\varLambda_{j,k}\beta+\varDelta^k_{i,k'}+\sigma^2\right),\ \forall k, \nonumber\\
&\text{C2}: \ P_k\varLambda_{j,k}\left(|h_{j,k}|^2+\varPhi_{f,k}G_{j,k}\right)\geq\left(2^{R_{min}}-1\right)\nonumber\\ & \times\left(P_k\varLambda_{i,k}(|h_{j,k}|^2+\varPhi_{f,k}|G_{j,k}\right)+\varDelta^k_{j,k'}+\sigma^2), \forall k, \nonumber\\
&\text{C3:} \ P_k\varLambda_{i,k}\leq P_k\varLambda_{j,k}, \ \forall k,\ \forall i,j,\nonumber\\
&\text{C4:} \ 0\leq P_k\leq P_{max},\ \forall k ,\nonumber\\
&\text{C5:} \ \varLambda_{i,k}+\varLambda_{j,k}\leq1,\ \forall k,\nonumber\\
&\text{C6:} \ 0\leq\varPhi_{f,k}\leq1,\ \forall f,\ \forall k,\nonumber
\end{alignat}
where constraints C1 and C2 guarantee the minimum data rate of IoT $\mathcal D_{i,k}$ and $\mathcal D_{j,k}$ associated with $\mathcal S_k$. Constraint C3 ensures the SIC decoding at receivers. Constraint C4 limits the transmit power of $\mathcal S_k$. Constraint C5 describes the condition for PAC of IoT devices connected to $\mathcal S_k$ while constraint C6 limits the RC of BST between 0 and 1.

\section{Energy Efficiency Maximization Solution}
The above EE maximization problem defined in (\ref{10}) is coupled on two variables in each cell, i.e., 1) Transmit power of the source and PAC of IoT devices in each cell, and 2) RC of BST in each cell. Thus, it is very hard to solve it directly. Therefore, this problem can be solved in three steps: i) First, we apply Dinkelbach method to transform the objective function of (P) into subtractive one; ii) second, on the fixed value of source transmit power in each cell, we compute the efficient RC of BST in each cell, and iii) third, we substitute the RC of BST in (8) and calculate the transmit power of source and PAC of IoT devices. Based on Dinkelbach method, the problem in (\ref{10}) can be transformed as: 
\begin{alignat}{2}
& \underset{{(\varLambda_{i,k},\varLambda_{j,k},\varPhi_{f,k})}}{\text{max}} \sum\limits_{k=1}^K{R_k}-\varPi\sum\limits_{k=1}^KP_k(\varLambda_{i,k}+\varLambda_{j,k})+p_c,\label{11}\nonumber \\
&s.t.\quad \text{C1}-\text{C6},
\end{alignat}
where $\varPi$ shows the maximum EE and it can be achieved when
\begin{align}
\sum\limits_{k=1}^K R_k -\varPi^*(\sum\limits_{k=1}^KP_k(\varLambda^*_{i,k}+\varLambda^*_{j,k})+p_c)=0.\label{12}
\end{align}
The problem in (\ref{11}) is still hard to be solved due to the interference terms in the SINR of $\mathcal D_{i,k}$ and $\mathcal D_{j,k}$ and the coupled variables $\varLambda_k$ and $\varPhi_{f,k}$. Thus, we decouple problem (\ref{11}) into two subproblems, i.e., RC selection subproblem and transmit power allocation subproblem.
\subsection{Efficient Reflection Coefficient Selection}
Here we compute the efficient RC of BST in each cell. For any given power allocation $\varLambda^*_k$ at $\mathcal S_k$ in each cell, the optimization problem in (\ref{11}) can be simplified to BST RC selection subproblem as: 
\begin{alignat}{2}
& \underset{{(\varPhi_{f,k})}}{\text{max}} \sum\limits_{k=1}^K\log_2\bigg\{\bigg(1+\frac{X_{i,k}+\varPhi_{f,k}Y_{i,k}}{Z_{i,k}}\bigg)\nonumber\\
& +\log_2\bigg(1+\frac{X_{j,k}+\varPhi_{f,k}Y_{j,k}}{Z_{j,k}+\varPhi_{f,k}W_{j,k}}\bigg)\bigg\} \nonumber\\
&-\varPi\sum\limits_{k=1}^KP_k(\varLambda^*_{i,k}+\varLambda^*_{j,k})+p_c,\label{13}\\
s.t.& \quad\text{C1}, \text{C2}, \text{C4}, \text{C6},\nonumber 
\end{alignat}
where $X_{i,k}=P_k\varLambda^*_{i,k}|h_{i,k}|^2$, $Y_{i,k}=P_k\varLambda^*_{i,k}G_{i,k}$, $Z_{i,k}={P_k\varLambda^*_{j,k}|h_{i,k}|^2\beta+\varDelta^k_{i,k'}+\sigma^2}$, $X_{j,k}=P_k\varLambda^*_{j,k}|h_{j,k}|^2$, $Y_{j,k}=P_k\varLambda^*_{j,k}G_{j,k}$, $Z_{j,k}={P_k\varLambda^*_{i,k}|h_{j,k}|^2\beta+\varDelta^k_{j,k'}+\sigma^2}$ and $W_{j,k}=P_k\varLambda^*_{i,k}G_{j,k}$. By using the following proposition, we demonstrate that $R_k$ is a concave/convex using $\varPhi_{f,k}$.
\begin{prop} The sum rate of $\mathcal S_k$
\begin{align}
&R_k=\log_2\bigg\{\bigg(1+\frac{X_{i,k}+\varPhi_{f,k}Y_{i,k}}{Z_{i,k}}\bigg)\nonumber\\
&+\log_2\bigg(1+\frac{X_{j,k}+\varPhi_{f,k}Y_{j,k}}{Z_{j,k}+\varPhi_{f,k}W_{j,k}}\bigg)\bigg\},
\end{align}
is concave/convex with reference to $\varPhi_{f,k}$.
\end{prop}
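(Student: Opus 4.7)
The plan is to split $R_k=R_{i,k}+R_{j,k}$ into its two summands and analyze curvature of each separately as a univariate function of $\varPhi_{f,k}$ on $[0,1]$.

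First, for $R_{i,k}=\log_2\!\bigl(1+(X_{i,k}+\varPhi_{f,k}Y_{i,k})/Z_{i,k}\bigr)$, I would pull the constant $Z_{i,k}$ inside to rewrite the argument as $(Z_{i,k}+X_{i,k}+\varPhi_{f,k}Y_{i,k})/Z_{i,k}$, which is an affine, strictly positive function of $\varPhi_{f,k}$. Since $\log_2$ is concave and increasing, and the composition of a concave function with an affine map is concave, $R_{i,k}$ is concave in $\varPhi_{f,k}$. This handles the easy summand in one line.

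The interesting work is in $R_{j,k}$. Following the same algebraic manipulation I would rewrite it as the difference of two logarithms,
\begin{align}
R_{j,k}=\log_2\!\bigl(A+B\varPhi_{f,k}\bigr)-\log_2\!\bigl(C+D\varPhi_{f,k}\bigr),\nonumber
\end{align}
with $A=Z_{j,k}+X_{j,k}$, $B=W_{j,k}+Y_{j,k}$, $C=Z_{j,k}$, $D=W_{j,k}$, all positive. Differentiating twice gives
\begin{align}
\tfrac{\partial^2 R_{j,k}}{\partial\varPhi_{f,k}^2}=\tfrac{1}{\ln 2}\!\left[\tfrac{D^2}{(C+D\varPhi_{f,k})^2}-\tfrac{B^2}{(A+B\varPhi_{f,k})^2}\right],\nonumber
\end{align}
so the sign is determined by comparing $D(A+B\varPhi_{f,k})$ with $B(C+D\varPhi_{f,k})$; the $\varPhi_{f,k}$ terms cancel and the comparison reduces to $DA$ versus $BC$, i.e.\ to $W_{j,k}X_{j,k}$ versus $Y_{j,k}Z_{j,k}$, \emph{independently} of $\varPhi_{f,k}$. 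Substituting back the definitions, $Y_{j,k}Z_{j,k}-W_{j,k}X_{j,k}=P_k^2\varLambda^*_{i,k}\varLambda^*_{j,k}G_{j,k}|h_{j,k}|^2(\beta-1)+P_k\varLambda^*_{j,k}G_{j,k}(\varDelta^k_{j,k'}+\sigma^2)$, and since this quantity does not depend on $\varPhi_{f,k}$, its sign is fixed over the whole feasible interval $[0,1]$.

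I would then conclude: if the above expression is nonnegative, $R_{j,k}$ is concave in $\varPhi_{f,k}$ and, combined with the concavity of $R_{i,k}$, so is $R_k$; if it is negative, $R_{j,k}$ is convex, and whether $R_k$ as a whole is concave or convex depends on which curvature dominates, but in either regime $R_k$ has a single, fixed-sign second derivative on $[0,1]$, which is precisely the ``concave/convex'' dichotomy asserted by the proposition. The key point to stress is that because the cross-term $BD\varPhi_{f,k}$ cancels, the curvature does not flip inside the feasible region, and this is what makes a KKT-based solution meaningful in the next step. The main obstacle I anticipate is purely bookkeeping: keeping the positive constants straight when substituting back $X,Y,Z,W$ so that the final sign condition is expressed cleanly in terms of the physical parameters $\beta$, $\varDelta^k_{j,k'}$, and the channel gains, rather than in the auxiliary symbols.
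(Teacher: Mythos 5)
Your route is essentially the paper's own (Appendix A): differentiate each logarithm twice in $\varPhi_{f,k}$ and reduce the curvature question to the sign of $Y_{j,k}Z_{j,k}-X_{j,k}W_{j,k}$ (the paper's $C_{j,k}$). Your observation that the cross term cancels, so this sign is constant over the whole feasible interval, is precisely what makes the paper's second derivative in (39) single-signed, and your treatment of the first summand $R_{i,k}$ (concave increasing function composed with a positive affine map) is a cleaner version of the same computation. Your expansion $Y_{j,k}Z_{j,k}-X_{j,k}W_{j,k}=P_k^2\varLambda^*_{i,k}\varLambda^*_{j,k}G_{j,k}|h_{j,k}|^2(\beta-1)+P_k\varLambda^*_{j,k}G_{j,k}(\varDelta^k_{j,k'}+\sigma^2)$ is also correct for the definitions printed below (13).

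The gap is in your final sentence. In the branch $Y_{j,k}Z_{j,k}-X_{j,k}W_{j,k}<0$ you have $R_{i,k}$ concave and $R_{j,k}$ convex, and you then assert that the sum $R_k$ still has a single, fixed-sign second derivative on $[0,1]$. That does not follow: the second derivative of $R_{i,k}$ is a negative quantity decaying like $(Z_{i,k}+X_{i,k}+\varPhi_{f,k}Y_{i,k})^{-2}$, while that of $R_{j,k}$ is a difference of two such terms with different affine denominators; these decay at different rates in $\varPhi_{f,k}$, so their sum can change sign inside $[0,1]$ and $R_k$ can have an inflection point. ``Each summand has fixed curvature'' does not imply ``the sum has fixed curvature.'' The paper closes this by asserting $C_{j,k}>0$ (Appendix B claims $C_{j,k}=\varDelta^k_{j,k'}+\sigma^2>0$), so that both summands are concave and hence so is $R_k$. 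That positivity is in fact automatic once $Z_{j,k}$ is taken consistently with the SINR (7): $\mathcal D_{j,k}$ performs no SIC, so no $\beta$ belongs in its interference term, and then $Y_{j,k}Z_{j,k}-X_{j,k}W_{j,k}=P_k\varLambda^*_{j,k}G_{j,k}(\varDelta^k_{j,k'}+\sigma^2)>0$; your $(\beta-1)$ cross term is an artifact of the (apparently typographical) $\beta$ in the expression for $Z_{j,k}$ below (13). With that correction the convex branch of your dichotomy never occurs, and your argument, completed this way, yields the concavity of $R_k$ that the subsequent KKT step actually relies on.
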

\begin{proof}
Refer to Appendix A.
\end{proof}
According to the Proposition 1, the optimization problem (\ref{13}) is concave which motivates us to exploit KKT conditions for obtaining optimal $\varPhi_{f,k}$. 
\begin{prop}
The closed-form of BST RC can be then expressed as:
\begin{align}
\varPhi_{f,k}= \left[\frac {(2^{\gamma^{min}_{i,k}}-1)-X_{i,k}}{Y_{i,k}}\right], \label{15}
\end{align}
\end{prop}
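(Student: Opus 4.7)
The plan is to exploit the concavity of the single-cell sum rate $R_k$ in $\varPhi_{f,k}$ established in Proposition 1, which makes problem (\ref{13}) concave and renders the KKT conditions both necessary and sufficient for optimality. I would first form the Lagrangian of (\ref{13}) by attaching nonnegative multipliers $\mu_k,\nu_k$ to the minimum-rate constraints C1 and C2 and multipliers $\lambda_k,\kappa_k$ to the box constraint C6 on $\varPhi_{f,k}$. Since the objective and constraints in (\ref{13}) are separable across cells in $\varPhi_{f,k}$ once $\varLambda^*_k$ and $P_k$ are held fixed, the problem decouples into one scalar optimization per cell, which is the form best suited to producing a single closed expression.

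I would then impose stationarity, $\partial \mathcal{L}/\partial \varPhi_{f,k}=0$. The derivative of $R_k$ is a sum of two rational expressions in $\varPhi_{f,k}$ built from the shorthand $X_{i,k},Y_{i,k},Z_{i,k},X_{j,k},Y_{j,k},Z_{j,k},W_{j,k}$ defined just after (\ref{13}), while the multiplier terms contribute the linear pieces $\mu_k Y_{i,k}$, $\nu_k(Y_{j,k}-(2^{R_{min}}-1)W_{j,k})$, $-\lambda_k$, and $+\kappa_k$. Combined, this yields a single rational equation in $\varPhi_{f,k}$ whose resolution is governed by complementary slackness. The stated closed form (\ref{15}) corresponds to the regime in which the QoS constraint C1 for $\mathcal{D}_{i,k}$ is active, so that $\mu_k>0$ forces
\begin{equation}
X_{i,k}+\varPhi_{f,k}Y_{i,k}=(2^{R_{min}}-1)\,Z_{i,k},
\end{equation}
and rearranging recovers (\ref{15}) under the identification of $\gamma^{min}_{i,k}$ with the induced minimum-SINR/rate threshold (with the noise-plus-interference factor $Z_{i,k}$ absorbed into the notation). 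The outer brackets $[\cdot]$ I read as a projection onto $[0,1]$ imposed by C6 whenever the raw value falls outside the feasible interval.

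The main obstacle will be justifying that C1 is indeed the binding constraint at the optimum rather than C2 or the upper bound $\varPhi_{f,k}\le 1$ in C6. Because increasing $\varPhi_{f,k}$ simultaneously boosts the desired signal $Y_{i,k}\varPhi_{f,k}$ and the interference contribution $W_{j,k}\varPhi_{f,k}$ in the second $\log_2$ ratio, the net monotonicity of $R_k$ depends on the relative magnitudes of $Y_{i,k},Y_{j,k},W_{j,k}$ and the denominators $Z_{i,k},Z_{j,k}$. I would rule out the alternative active cases by comparing the unconstrained stationary point against the feasibility boundaries and showing, using the concavity from Proposition 1 together with the SIC ordering imposed by C3, that the first constraint encountered as $\varPhi_{f,k}$ grows is precisely C1; this case analysis singles out the displayed closed-form expression as the KKT solution.
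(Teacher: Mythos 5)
Your proposal matches the paper's Appendix B essentially step for step: form the Lagrangian of (13) with multipliers on C1, C2 and C6, impose stationarity in $\varPhi_{f,k}$, invoke complementary slackness to conclude that the rate constraint of $\mathcal D_{i,k}$ is active, and solve $X_{i,k}+\varPhi_{f,k}Y_{i,k}=(2^{R_{min}}-1)Z_{i,k}$ for $\varPhi_{f,k}$, with the $Z_{i,k}$ factor absorbed into the stated formula exactly as you note. The only divergence is in how the binding constraint is identified: where you propose a boundary/monotonicity case analysis, the paper instead argues that the left-hand side of the stationarity equation is strictly positive, which forces the multipliers $\lambda_{i,k},\lambda_{j,k}$ to be positive and hence both C1 and C2 to be active --- a step the paper itself justifies only loosely, so the obstacle you flag is genuinely present in the original proof as well.
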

\begin{proof}
Please, refer to Appendix B.
\end{proof}
In the sequel, we calculate efficient power allocation in each cell.

\subsection{Efficient Power Allocation}
Here we calculate the efficient transmit power of source and PAC of IoT devices in each cell. For the fixed value of BST RC $\varPhi^*_{f,k}$, the optimization problem in (\ref{11}) can be then written as:
\begin{alignat}{2}
	& \underset{{(\varLambda_{i,k},\varLambda_{j,k})}}{\text{max}} \sum\limits_{k=1}^K {E_k}= \underset{{(\varLambda_{i,k},\varLambda_{j,k})}}{\text{max}} \sum\limits_{k=1}^K \frac{R_{i,k}+R_{j,k}}{P_k(\varLambda_{i,k}+\varLambda_{j,k})+p_c}\label{16}\nonumber \\
	& s.t.\quad \text{C1}-\text{C5}.
\end{alignat}
We can also write Equation (\ref{5}) and (\ref{7}) as:
\begin{align}
\gamma^k_{i\rightarrow i}=\frac{P_k\varLambda_{i,k}A_{i,k}}{P_k\varLambda_{j,k}{B_{i,k}}+C_{i,k}}, \label{17}
\end{align}
with
${A_{i,k}}=|h_{i,k}|^2+\varPhi_{f,k}G_{i,k}$ , ${B_{i,k}}=|h_{i,k}|^2\beta$,
${C_{i,k}}=\varDelta^k_{i,k'}+\sigma^2$,
and
\begin{align}
\gamma^k_{j\rightarrow j}=\frac{P_k\varLambda_{j,k}A_{j,k}}{P_k\varLambda_{i,k}B_{j,k}+C_{j,k}}, \label{18}
\end{align}
where
${A_{j,k}}=|h_{j,k}|^2+\varPhi_{f,k}G_{j,k}$ ,
 ${B_{j,k}}=|h_{j,k}|^2+\varPhi_{f,k}G_{j,k}$,
${C_{j,k}}=\varDelta^k_{j,k'}+\sigma^2$.
In the following proposition, we will prove that (\ref{16}) is concave/convex regarding $\boldsymbol{\varLambda_k}=\{\varLambda_{i,k}, \varLambda_{j,k}\}$.
\begin{prop} The sum rate of $\mathcal S_k$
\begin{align}
&R_k=\log_2(1+\frac{P_k\varLambda_{i,k}A_{i,k}}{P_k\varLambda_{j,k}{B_{i,k}}+C_{i,k}})\nonumber\\
&+\log_2(1+\frac{P_k\varLambda_{j,k}A_{j,k}}{P_k\varLambda_{i,k}B_{j,k}+C_{j,k}}) 
\end{align}
is concave/convex with reference to $\boldsymbol{\varLambda_k}=\{\varLambda_{i,k}, \varLambda_{j,k}\}$.
\end{prop}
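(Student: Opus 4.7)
The plan is to establish concavity of $R_k$ jointly in $\boldsymbol{\varLambda_k}=\{\varLambda_{i,k},\varLambda_{j,k}\}$ by analysing the $2\times 2$ Hessian matrix. Since each logarithmic rate term is twice continuously differentiable on the open feasible region (the constraints C1--C5 keep every denominator strictly positive), concavity is equivalent to the Hessian being negative semi-definite throughout that region. The overall strategy mirrors the one already used in Appendix A for Proposition 1, but extended from the scalar variable $\varPhi_{f,k}$ to the two-dimensional vector $\boldsymbol{\varLambda_k}$.

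First, I would rewrite each rate term in the more tractable difference-of-logs form,
\begin{align*}
R_{i,k}&=\log_2\bigl(P_k\varLambda_{i,k}A_{i,k}+P_k\varLambda_{j,k}B_{i,k}+C_{i,k}\bigr)-\log_2\bigl(P_k\varLambda_{j,k}B_{i,k}+C_{i,k}\bigr),\\
R_{j,k}&=\log_2\bigl(P_k\varLambda_{j,k}A_{j,k}+P_k\varLambda_{i,k}B_{j,k}+C_{j,k}\bigr)-\log_2\bigl(P_k\varLambda_{i,k}B_{j,k}+C_{j,k}\bigr).
\end{align*}
Each of these four summands is log-of-affine and therefore concave in $(\varLambda_{i,k},\varLambda_{j,k})$, but the minus signs turn two of them into convex contributions, so the concavity of the sum is not immediate from sum-preservation and must be extracted from second-order information.

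Next I would assemble $H(\boldsymbol{\varLambda_k})=\nabla^2 R_k$ entry by entry. The diagonal element $H_{11}=\partial^2 R_k/\partial\varLambda_{i,k}^2$ splits into a strictly negative ``own'' contribution from $R_{i,k}$ (whose argument is affine in $\varLambda_{i,k}$) and a strictly positive ``cross'' contribution from $R_{j,k}$ whose derivation parallels the standard identity $\tfrac{d^2}{dy^2}\bigl[\log(A'+B'y+C')-\log(B'y+C')\bigr]>0$; the element $H_{22}$ is handled symmetrically, and the off-diagonal entry $H_{12}=H_{21}$ is obtained by differentiating each rate's mixed coupling. With the entries in hand, negative semi-definiteness reduces by Sylvester's criterion to the two requirements $H_{11}\le 0$ and $\det H\ge 0$, which I would verify by placing the rational expressions over a common denominator and exhibiting the resulting numerator as sign-definite.

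The principal obstacle will be the determinant condition, because the positive ``convex'' and negative ``concave'' contributions on the diagonal are of the same order in the variables. A careful algebraic cancellation, very likely invoking the NOMA power-ordering constraint C3 ($\varLambda_{i,k}\le\varLambda_{j,k}$) and the physically meaningful imperfect-SIC bound $0\le\beta\le 1$, will be needed to pin down the sign of $\det H$. Should joint concavity fail on part of the feasible set, the natural fallback is block-coordinate concavity, i.e., concavity in $\varLambda_{i,k}$ for fixed $\varLambda_{j,k}$ and vice versa, which is directly visible from the log-of-affine decomposition above and is already sufficient for the Lagrangian-dual and KKT treatment that the paper performs in the sequel.
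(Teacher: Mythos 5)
Your plan follows essentially the same route as the paper's Appendix~C: form the $2\times 2$ Hessian of $R_k$ in $(\varLambda_{i,k},\varLambda_{j,k})$ and check the leading principal minors ($\varphi_{1,1}<0$, $\det H>0$) via Sylvester's criterion. The one point worth noting is that the ``principal obstacle'' you flag --- pinning down the sign of $\det H$, and indeed of the diagonal entries, each of which is a difference of two positive terms --- is precisely the step the paper asserts without verification, so your proposed recourse to the feasibility constraints (or the fallback to block-coordinate concavity) is a reasonable attempt to fill a gap rather than a deviation from the paper's argument.
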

\begin{proof}
The proof is demonstrate in Appendix C.
\end{proof}
Based on Proposition 3, the objective function in (\ref{16}) is concave-convex fractional programming problem, which can be solved through Dinkelbach algorithm as follow:
\begin{alignat}{2}
& \underset{{(\varLambda_{i,k},\varLambda_{j,k})}}{\text{max}} \sum\limits_{k=1}^K {E_k}= \underset{{(\varLambda_{i,k},\varLambda_{j,k})}}{\text{max}} \sum\limits_{k=1}^K F(\varPi) \nonumber\\
&= \underset{{(\varLambda_{i,k},\varLambda_{j,k})}}{\text{max}} \sum\limits_{k=1}^K R_k - \varPi(\sum\limits_{k=1}^KP_k(\varLambda_{i,k}+\varLambda_{j,k})-p_c)\label{19}\\
& s.t.\quad \text{C1}-\text{C5}.\nonumber 
\end{alignat}
where $\varPi=\frac{R_k}{\sum\limits_{k=1}^KP_k(\varLambda_{i,k}+\varLambda_{j,k})+p_c}$, while $F(\varPi)$ is the parametric form of fractional objective function in (\ref{19}). Solving the roots of $F(\varPi)$ is equivalent to computing the fractional objective function in (\ref{19}). $F(\varPi)$ as function of $\varPi$ is convex because it is negative when $\varPi$ tends to infinity and is positive when $\varPi$ approaches minus infinity. Therefore, motivated by the above observations, this convex problem can be solved through Lagrangian dual decomposition method. The Lagrangian function of problem (\ref{19}) can be defined as:
\begin{alignat}{2}
& L(\boldsymbol{\varLambda_k},\boldsymbol{\lambda_k},\mu_k,\epsilon_k)=\sum\limits_{k=1}^K\bigg\{\log_2\bigg(1+\frac{P_k\varLambda_{i,k}A_{i,k}}{P_k\varLambda_{j,k}{B_{i,k}}+C_{i,k}}\bigg)\nonumber\\
&+\log_2\bigg(1+\frac{P_k\varLambda_{j,k}A_{j,k}}{P_k\varLambda_{i,k}B_{j,k}+C_{j,k}}\bigg)\bigg\}\nonumber\\
&-\varPi\sum\limits_{k=1}^KP_k(\varLambda_{i,k}+\varLambda_{j,k})-p_c+\lambda_{i,k}(P_k\varLambda_{i,k}A_{i,k}\label{20}\\
&-(2^{R_{min}}-1){P_k\varLambda_{j,k}{B_{i,k}}+C_{i,k}})+\lambda_{j,k}(P_k\varLambda_{j,k}A_{j,k}\nonumber\\
&-(2^{R_{min}}-1)(P_k\varLambda_{i,k}B_{j,k}+C_{j,k})+\mu_k(P_{max}-P_k)\nonumber\\
&+\epsilon_k(1-\varLambda_{i,k}-\varLambda_{j,k}),\nonumber
\end{alignat}
where $\boldsymbol{\lambda_k}=\{\lambda_{i,k}, \lambda_{j,k}\}$, $\mu_k$, and $\epsilon_k$ are the dual variables, which are related to the constraints C1, C2, C4, and C5. The Lagrangian dual function can be presented as:
\begin{align}
g(\boldsymbol{\lambda_k},\mu_k, \epsilon_k) = \underset{{\boldsymbol{\varLambda_k}>0,\boldsymbol{\lambda_k},\mu_k,\epsilon_k\ge 0}}{\text{max}}L(\boldsymbol{\varLambda_k},\boldsymbol{\lambda_k},\mu_k,\epsilon_k) \label{21} 
\end{align}
Then, its a dual problem can be formulated as follow:
\begin{align}
\underset{{\boldsymbol{\lambda_k},\mu_k,\epsilon_k\ge 0}}{\text{min}}g(\boldsymbol{\lambda_k},\mu_k, \epsilon_k) \label{22} 
\end{align}
For the fixed dual variables and given EE $\varPi$, the formulated optimization problem depends on KKT conditions. 
\begin{prop}
The closed-form expression for energy-efficient PAC of $\mathcal D_{i,k}$ and $\mathcal D_{j,k}$ can be derived as:
\begin{align}
&\varLambda^*_{i,k}= \left[\frac {-b \pm \sqrt{b^2-4ac}}{2a}\right]^+ \label{23}\\ 
&\varLambda^*_{j,k}= 1- \varLambda^*_{i,k} \label{24}
\end{align}
where $[.]^+=\text{max}[0,.]$ and the values of 
\begin{align}
& a=P_{k}^2 (-A_{i,k} A_{j,k} B_{j,k} (1 + \lambda_{i,k}) (C_{i,k} + B_{i,k} P_{k})+ A_{i,k}B_{j,k}^2 \nonumber\\& (1 + \lambda_{i,k}) (C_{i,k} + B_{i,k} P_{k}) +  A_{i,k} A_{j,k} B_{i,k}(1 +\lambda_{j,k})(C_{j,k} \nonumber\\&  + B_{j,k} P_{k}) -  A_{j,k} B_{i,k}^2 (1 + \lambda_{j,k}) (C_{j,k} + B_{j,k}P_{k})),\\
& b=P_{k} (C_{i,k} + B_{i,k} P_{k}) (-A_{i,k} C_{j,k} (-2 B_{j,k} (1 + \lambda_{i,k}) \nonumber\\&+ A_{j,k}(2 + L_{i,k} + \lambda_{j,k})) + A_{i,k} A_{j,k} B_{j,k} (\lambda_{i,k} - \lambda_{j,k}) P_{k} \nonumber\\&+ 2A_{j,k} B_{i,k} (1 + \lambda_{j,k}) (C_{j,k} + B_{j,k} P_{k})),\\
& c=(C_{i,k} + B_{i,k} P_{k}) (A_{i,k} C_{j,k}^2 (1 + \lambda_{i,k})+ A_{j,k} \nonumber\\& (-C_{i,k} (1 +\lambda_{j,k}) (C_{j,k} + B_{j,k} P_{k})+ P_{k} (A_{i,k} C_{j,k}\nonumber\\&  (1 + \lambda_{i,k}) - B_{i,k}(1 + \lambda_{i,k}) (C_{j,k} + B_{j,k} P_{k})))).
\end{align}
\end{prop}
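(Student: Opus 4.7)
The plan is to invoke Proposition 3, which guarantees that the objective in (\ref{19}) is concave in $\boldsymbol{\varLambda_k}$, so that the KKT conditions applied to the Lagrangian in (\ref{20}) are necessary and sufficient for the optimum. I would therefore first form the stationarity equations $\partial L/\partial \varLambda_{i,k}=0$ and $\partial L/\partial \varLambda_{j,k}=0$ for each cell $k$, treating the dual variables $\lambda_{i,k},\lambda_{j,k},\mu_k,\epsilon_k$ and the Dinkelbach parameter $\varPi$ as fixed, together with complementary slackness on C1, C2, C4, C5.

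Next, I would use the observation that at the maximum the sum-power constraint C5 is tight (any slack in $\varLambda_{i,k}+\varLambda_{j,k}$ could be reassigned to increase at least one of the two SINRs, so $\epsilon_k>0$ and therefore $\varLambda_{j,k}=1-\varLambda_{i,k}$). Substituting this relation collapses the two stationarity equations into a single equation in the scalar unknown $\varLambda_{i,k}$. Each $\log_2(1+\cdot)$ term produces, after differentiation, a rational function whose numerator and denominator are affine in $\varLambda_{i,k}$; clearing the common denominators $(P_k\varLambda_{j,k}B_{i,k}+C_{i,k})$ and $(P_k\varLambda_{i,k}B_{j,k}+C_{j,k})$ and grouping powers of $\varLambda_{i,k}$ yields a quadratic $a\varLambda_{i,k}^2+b\varLambda_{i,k}+c=0$ whose coefficients $a,b,c$ are precisely the expressions listed in the statement (each collecting the source power $P_k$, the channel aggregates $A_{i,k},A_{j,k},B_{i,k},B_{j,k}$, the noise-plus-interference terms $C_{i,k},C_{j,k}$, and the duals $\lambda_{i,k},\lambda_{j,k}$).

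Applying the quadratic formula then gives the two candidate roots; projecting onto the feasible interval via $[\cdot]^+=\max\{0,\cdot\}$, and selecting the root that additionally satisfies the SIC ordering C3 and the per-device rate constraints C1--C2, delivers $\varLambda^*_{i,k}$ as in (\ref{23}), and $\varLambda^*_{j,k}=1-\varLambda^*_{i,k}$ follows immediately from the activated C5. The dual variables would then be updated by a standard sub-gradient step on the outer problem (\ref{22}), and $\varPi$ by the Dinkelbach update until $F(\varPi)=0$.

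The main obstacle will be purely algebraic: carefully differentiating the two coupled logarithms, clearing denominators without losing sign information, and verifying that after substitution of $\varLambda_{j,k}=1-\varLambda_{i,k}$ the cubic-looking cross terms in fact cancel so that the stationarity condition is genuinely quadratic with coefficients matching $a$, $b$, $c$ in the statement. A secondary subtlety is disambiguating the $\pm$ branch of the quadratic: I would argue that only one of the two roots lies in $[0,1]$ and is consistent with C3 under the ordering $|h_{i,k}|^2+\varPhi^*_{f,k}G_{i,k}\ge |h_{j,k}|^2+\varPhi^*_{f,k}G_{j,k}$ induced by the NOMA user assignment, which then fixes the sign uniquely.
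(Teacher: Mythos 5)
Your proposal follows essentially the same route as the paper's Appendix D: set the stationarity condition $\partial L/\partial \varLambda_{i,k}=0$ with the duals and $\varPi$ fixed, substitute $\varLambda_{j,k}=1-\varLambda_{i,k}$ from the tight constraint C5, clear the affine denominators to obtain a quadratic $a\varLambda_{i,k}^2+b\varLambda_{i,k}+c=0$, and apply the quadratic formula with the $[\cdot]^+$ projection. Your added justifications (concavity via Proposition 3 making KKT sufficient, the argument for why C5 is active, and the feasibility-based selection of the $\pm$ branch) are sound refinements of steps the paper leaves implicit, but they do not change the method.
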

\begin{proof}
Please, refer to Appendix D.
\end{proof}
Next we calculate the optimal transmit power of each source, i.e., $P_k$. To do so, we differentiate (\ref{20}) with respect to $P_{k}$, it results as:
\begin{align}
\tau+\chi P_{k}+\psi P_{k}^2+\Gamma P_{k}^3+\omega P_{k}^4=0,\label{neweq}
\end{align}
where $\tau, \chi, \psi, \Gamma$ and $\omega$ are given in (26)-(29) on the top of the next page.
\begin{figure*}
\begin{align}
\tau &= C_{i,k} C_{j,k}(-A_{j,k} C_{i,k} (-1 + \Lambda_{i,k})) (1+\lambda_{j,k})+C_{j,k}(A_{i,k} \Lambda_{i,k} (1+\lambda_{i,k})-C_{i,k} (\mu_{k}+\varPi)),\\
\chi&= C_{i,k} C_{j,k} (2 (B_{i,k} C_{j,k} (-1+\Lambda_{i,k}) -B_{j,k} C_{i,k} \Lambda_{i,k}) (\mu_{k}+\varPi)+A_{j,k} (-1 + \Lambda_{i,k}) (2 B_{i,k} (-1 + \Lambda_{i,k}) (1 + \lambda_{j,k})\nonumber\\& - A_{i,k} \Lambda_{i,k} (2 + \lambda_{i,k} + \lambda_{j,k}) + C_{i,k} (\mu_{k} + \varPi)) +   A_{i,k}\Lambda_{i,k} (2 B_{j,k} \Lambda_{i,k} (1 + \lambda_{i,k}) - C_{j,k} (\mu_{k} + \varPi))),\\
\psi&=-(B_{i,k}^2 C_{j,k}^2 (-1 + \Lambda_{i,k})^2 - 4 B_{i,k} B_{j,k} C_{i,k} C_{j,k} (-1 +\Lambda_{i,k}) \Lambda_{i,k} + B_{j,k}^2 C_{i,k}^2 \Lambda_{i,k}^2) (\mu_{k} + \varPi) + A_{i,k} \Lambda_{i,k} (B_{j,k}^2\nonumber\\& C_{i,k} \Lambda_{i,k}^2 (1 + \lambda_{i,k}) +  B_{i,k} C_{j,k}^2 (-1 + \Lambda_{i,k}) (\mu_{k} + \varPi)-  2 B_{j,k} C_{i,k} C_{j,k} \Lambda_{i,k} (\mu_{k} + \varPi)) -  A_{j,k} (-1 + \Lambda_{i,k})\nonumber\\& (B_{i,k}^2 C_{j,k} (-1 + \Lambda_{i,k})^2 (1 + \lambda_{j,k}) -  B_{i,k} C_{j,k} (-1 +\Lambda_{i,k}) (A_{i,k} \Lambda_{i,k} (1 + \lambda_{j,k}) - 2 C_{i,k} (\mu_{k} + \varPi)) - C_{i,k} \nonumber\\&\Lambda_{i,k} (B_{j,k} C_{i,k} (\mu_{k} + \varPi) + A_{i,k} (-B_{j,k} \Lambda_{i,k} (1 + \lambda_{i,k}) + C_{j,k} (\mu_{k} + \varPi)))),\\      
\Gamma&=(B_{j,k} \Lambda_{i,k} (-2 B_{i,k}^2 C_{j,k} (-1 + \Lambda_{i,k})^2 + 2 B_{i,k} (B_{j,k} C_{i,k}+ A_{i,k} C_{j,k}) (-1 + \Lambda_{i,k}) \Lambda_{i,k} -A_{i,k} B_{j,k} C_{i,k} \Lambda_{i,k}^2) +A_{j,k}  \nonumber\\&(-1 + \Lambda_{i,k}) (B_{i,k}^2 C_{j,k} (-1 + \Lambda_{i,k})^2 - B_{i,k} (2 B_{j,k} C_{i,k} + A_{i,k} C_{j,k}) (-1 + \Lambda_{i,k}) \Lambda_{i,k} + A_{i,k} B_{j,k} C_{i,k} \Lambda_{i,k}^2)) (\mu_{k} + \varPi),\\
\omega&=-B_{i,k} B_{j,k} (-1 +\Lambda_{i,k}) \Lambda_{i,k} (B_{i,k} (-1 + \Lambda_{i,k}) - A_{i,k} \Lambda_{i,k}) (A_{j,k} - A_{j,k} \Lambda_{i,k} + B_{j,k} \Lambda_{i,k}) (\mu_{k} + \varPi).     
\end{align}\hrulefill
\end{figure*}
Equation (\ref{neweq}) is the polynomial of order four which can be easily solved by any conventional solver. The objective of the problem is to maximize the EE, thus, $P_{k}^{*}$ can be founded through the larger root of (\ref{neweq}). With optimal $\varLambda^*_{i,k}$, $\varLambda^*_{j,k}$ and $P_{k}^*$ , problem (\ref{21}) can be written as:
\begin{alignat}{2}
& \underset{{(\varLambda^*_{i,k},\varLambda^*_{j,k}, P_{k}^*)}}{\text{max}} \sum\limits_{k=1}^K\bigg\{\log_2\bigg(1+\frac{P_k^*\varLambda^*_{i,k}A_{i,k}}{P_k^*\varLambda^*_{j,k}{B_{i,k}}+C_{i,k}}\bigg)\nonumber\\
&+\log_2\bigg(1+\frac{P_k^*\varLambda^*_{j,k}A_{j,k}}{P_k^*\varLambda^*_{i,k}B_{j,k}+C_{j,k}}\bigg)\bigg\}\nonumber\\
&-\varPi\sum\limits_{k=1}^K P_k^*(\varLambda^*_{i,k}+\varLambda^*_{j,k})+p_c,\\
&\text{subject to:}\quad {\boldsymbol{\lambda_k},\epsilon_k\ge 0} \nonumber 
\end{alignat}
Subsequently, we use sub-gradient method to iteratively update the Lagrangian multipliers $\lambda_{i,k}$, $\lambda_{j,k}$, $\mu_k$ and $\epsilon_k$ as \cite{9134383}:
\begin{align}
\lambda_{i,k}(t+1)&=\lambda_{i,k}(t)+\delta(t)(P_k^*\varLambda^*_{i,k}A_{i,k}\nonumber\\
&-(2^{R_{min}}-1){P_k^*\varLambda^*_{j,k}{B_{i,k}}+C_{i,k}}), \forall k, \label{sub1}
\end{align}
\begin{align}
\lambda_{j,k}(t+1)&=\lambda_{j,k}(t)+\delta(t)(P_k^*\varLambda^*_{j,k}A_{j,k}\nonumber\\
&-(2^{R_{min}}-1)(P_k^*\varLambda^*_{i,k}B_{j,k}+C_{j,k}),\forall k,\label{sub2}
\end{align}
\begin{align}
\epsilon_{k}(t+1)=&\epsilon_{k}(t)+\delta(t)(1-(\varLambda^*_{i,k}+\varLambda^*_{j,k})),\forall k,\label{sub3} 
\end{align}
\begin{align}
\mu_{k}(t+1)=&\mu_{k}(t)+\delta(t)(P_{max}-P_{k}^*),\forall k,\label{sub4new} 
\end{align}
where $t$ is the index of iteration. Equations (\ref{sub1}), (\ref{sub2}), (\ref{sub3}), and (\ref{sub4new}) are iteratively calculated until the required criterion satisfied.

\subsection{Proposed Algorithm and Complexity Analyses}
Here, we design algorithm based on the solutions provided in Section III-A and B, respectively. As shown in Algorithm 1, we initialize all the system parameters and variables. For the given values of $P_k$, $\varLambda_{i,k}$ and $\varLambda_{j,k}$, we compute $\varPhi_{f,k}$. Subsequently, we substitute the value of $\varPhi^*_{f,k}$ in power allocation subproblem (14) and calculate $\varLambda_{i,k}$ and $\varLambda_{j,k}$ followed by $P_k$. Then, we iteratively update $\lambda_{i,k}$, $\lambda_{j,k}$ and $\varepsilon_k$. The above process will continue until convergence criteria satisfied.

The computational complexity of the proposed optimization framework can be calculated regarding the number of iterations. The complexity of our algorithm depends on the different variables and parameters of the system such number of cells and the number of users, i.e., $K,I,J$. Based on these observation, the complexity of the proposed algorithm in any given iteration is computed as $\mathcal O[(I+J)K]$. Since we have considered two user in each cell, therefore, the computational complexity can also be written as $\mathcal O[2K]$. Further, if the number of total iteration required for convergence is $T$, the total computational complexity becomes $\mathcal O[2TK]$.
   \begin{algorithm}[t]
   \eIf{$t=0$}{
        Initialize all parameters and variables, i.e., number of cells, number of users, number of BSTs, maximum power budget of each source, RC of each BST, variance, minimum data rate, circuit power, values of imperfect SIC and channel gains.
         }{First we calculate the reflection coefficient of backscatter tag in each cell for the given values of $\varLambda_{i,k}$,$\varLambda_{i,k}$ and $P_k$
         
    \For{$k=1:K$}{Find $\varPhi_{f,k}$ according to (13) 
    }Next we substitute the value of $\varPhi^*_{f,k}$ in (14) and calculate the values of $\varLambda_{i,k}$,$\varLambda_{i,k}$ and $P_k$
    
    \While{not converge}{\For{$k=1:K$, $i=1:I$, $j=1:J$}{Compute $\varLambda_{i,k}$ according to (22) and $\varLambda_{j,k}$ according to (23) \\
    Compute $P_k$ according to (27)\\
    Update the dual variables $\lambda_{i,k}$, $\lambda_{j,k}$ and $\epsilon_{k}$
        }
    }Return $P^*_k$, $\varLambda^*_{i,k}$, $\varLambda^*_{j,k}$, $\varPhi_{f,k}$}
    \caption{Proposed resource optimization algorithm.}
   \end{algorithm}  
\section{Numerical Results and Discussion}

In this section, we provide the simulation results to evaluate the performance of the proposed framework. Unless specified otherwise the system parameters are taken as follow: $\sigma$=0.01,  $\beta$=0.1, K=10, $p_{c}$=0.1 W, and $P_{max}$= 32 dBm. In the considered problem, the power allocation is always lower bounded by the $R_{min}$ in order to make the impact of changing $P_{max}$ more prominent. In the first three results of this section, we have taken $R_{min}$=0. To analyze the benefits of backscattering, the performance of the proposed framework WBS (with backscatter sensor) is compared with a simplified network with no backscatter sensor (NBS)\footnote{Due to the novelty of the proposed framework, it is difficult to compare it with the existing works of the literature. Thus, we resort to compare it with pure NOMA without backscatter communication.}. 
\begin{figure}[!t]
\centering
\includegraphics [width=0.48\textwidth]{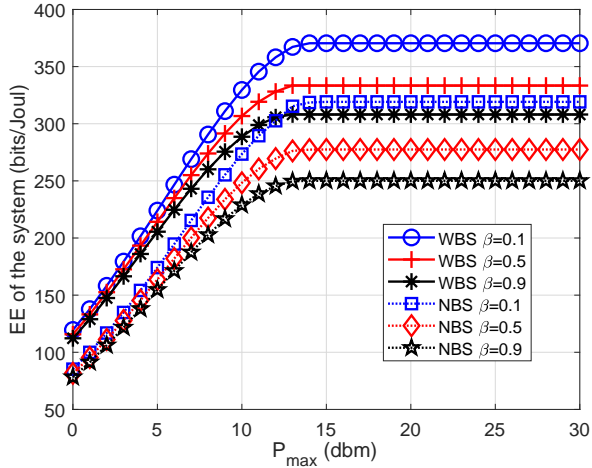}
\caption{The impact of increasing $P_{max}$ on the total EE of the system with different values of $\beta$}
\label{f2}
\end{figure}
\begin{figure}[!t]
\centering
\includegraphics [width=0.48\textwidth]{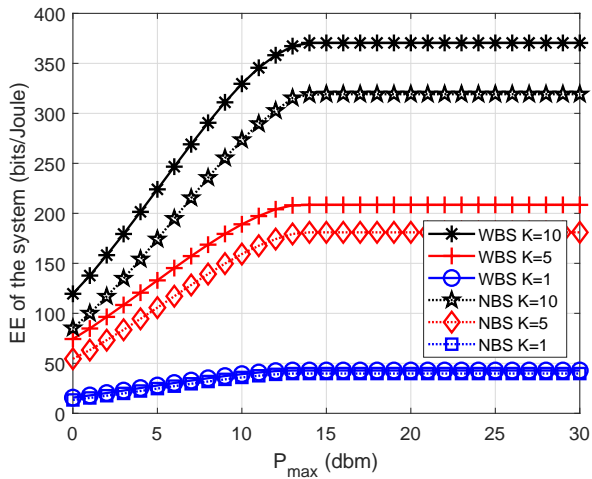}
\caption{The effect of increasing $P_{max}$ on the total EE for different number of cell in the system}
\label{f3}
\end{figure}

The effect of increasing $P_{max}$ on the total EE of the system is presented in Fig. \ref{f2}. An increase in the value of $P_{max}$ results in increasing the EE of the system initially. However, after a certain point, an increase in $P_{max}$ has no impact on the total EE. This is because, at these points, the transmission power is efficient, and allocating more power for the transmission results in decreasing the total EE of the system. Thus, when the value of $P_{max}$ is further increased, the allocated power for the transmission remains unchanged. Further, it can be seen that smaller values of $\beta$ result in providing more EE. The reason is that, at small $\beta$, less interference is faced by the near IoT devices, whereas, increasing $\beta$ would increase the SIC error resulting in the reduction of the overall system EE. At smaller values of $P_{max}$, the transmission power is also very less, this cause a very small interference to the other IoT devices in the system. Thus, the values of total EE for different $\beta$ have a very small gap, for smaller values of $P_{max}$. However, this gap increases with the increasing $P_{max}$, i.e., as the transmission power increases the interference also increase and the impact of imperfect SIC on the EE becomes more prominent. It is clear from Fig. \ref{f2} that the system with BST outperforms the network with no BST for all values of $P_{max}$. 

The total EE of the system also depends on total cells in the network. The impact of increasing $P_{max}$ on the EE of system containing different $K$ is shown in the Fig. \ref{f3}. For any value of K, increasing the value of $P_{max}$ increases the EE initially, however the efficiency becomes constant after a certain point because the transmission power remains unchanged. It is interesting to see that the difference between the EE offered the WBS and NBS systems increases with increasing $P_{max}$. This is because when the transmission power increases, the interference faced by all the IoT devices also increase. In the case of WBS systems, the increased transmission power also results in increasing the BST rate. Hence, the increase in the EE of WBS is more, as compared to the NBS. With more number of $K$ in the system, the benefit of BST becomes more clear, as it clear from the Fig. \ref{f3} that the gap between WBS and NBS increases with
\begin{figure}[!t]
\centering
\includegraphics [width=0.48\textwidth]{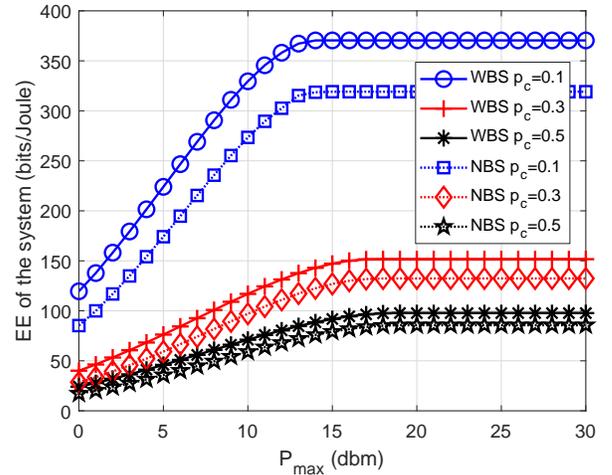}
\caption{The effect of increasing $P_{max}$ on the EE for different $p_c$}
\label{f5}
\end{figure}

The circuit power consumption ($p_c$) also affects the EE of the network. Fig. \ref{f5} shows that larger values of $p_c$ decreases the EE of the system. Further, it is clear from the figure that the WBS systems outperform the NBS for all values of $P_{max}$. An important point to note in Fig. \ref{f5} is that, when $p_c$ is increased the optimal value of power allocation is achieved at comparatively greater value of $P_{max}$. As in the case of $p_c$=0.1, the EE becomes constant at $P_{max}$= 16 dBm. However, for $p_c$=0.3 and $p_c$=0.5, the convergence behavior of EE is observed for $P_{max}\geq$ 19 and 21, respectively. This shows that for smaller values of $p_c$ consumption, the optimal behavior of the network is obtained with small values of $P_{max}$.
\begin{figure}[!t]
\centering
\includegraphics [width=0.48\textwidth]{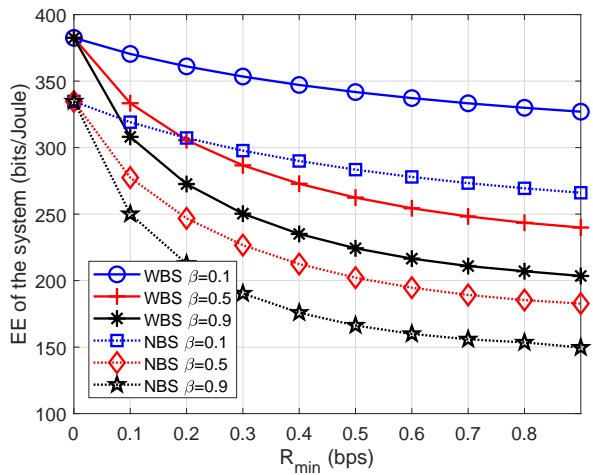}
\caption{The impact of increasing $R_{min}$ on the EE with different $\beta$}
\label{f6}
\end{figure}

The effect of increasing required rate of IoT devices ($R_{min}$) on the system EE shown in Fig. \ref{f6}. It is observed that the total EE decreases with the increasing values of $R_{min}$. The possible reason for this is the increase in the transmit power to satisfy the required rate of those IoT devices with weaker channel gains. However, this will reduce the overall EE of the network.
If the rate requirement can not be satisfied by varying $\Lambda_{i,k}$ then the system increases $P_{k}$ which results in further decreasing the EE. Data rate of IoT devices is a logarithmic function of power, hence, allocating more power to meet the rate requirement it results in decreasing the EE of the system. This can also be seen from the EE definition in (\ref{9}), as the numerator increases logarithmically with power and the increase in denominator is linear, when power allocation is increased beyond the optimal point, the EE of the system reduces. Another interesting thing to note in the Fig. \ref{f6} is that when $R_{min}$ is increased the EE of NBS decreases more rapidly as compared to WBS. This is because in NBS system when the allocated power is increased, it also results in increasing the interference which further adds to decrease the EE. However, in the case of WBS, allocating more power also enhances the data rate of BST rate, so this compensates for the increased interference to some extent. 

\begin{figure}[!t]
\centering
\includegraphics [width=0.48\textwidth]{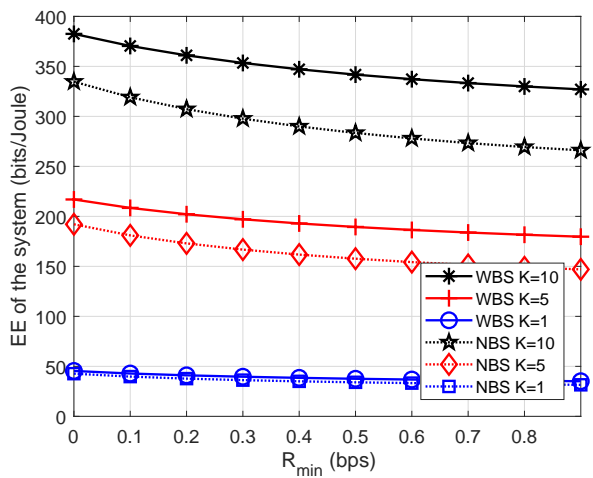}
\caption{The effect of increasing $R_{min}$ on the EE with different number of cells}
\label{f7}
\end{figure}
\begin{figure}[!t]
\centering
\includegraphics [width=0.48\textwidth]{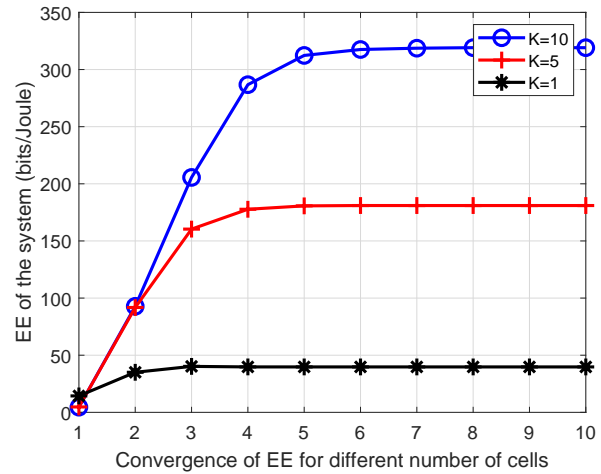}
\caption{Convergence of EE for different number of cells}
\label{f9}
\end{figure}
The results in Fig. \ref{f7} depicts the gap of system EE of WBS and NBS. The performance gap increases with the increase in K. This is because when the number of $K$ is increased, the total BST in the system also increases, so the benefit of BSC becomes more prominent. Another point to note is that in the case of WBS when $R_{min}$ is increased, the reduction in the case of K=10 is more rapid as compared to the system with 5 cells, and the network with only one cell faces the least reduction in the EE. This is because the IoT devices in the system having more cells receive more interference. Thus, when the allocated power is increased to satisfy the minimum rate requirement, the EE decreases rapidly (for K=10 each user faces interference from 9 other cells, whereas in the case of K=5, the inter-cell interference is caused by 4 cells). In the case of K=1 the users only face intra-cell interference which is somewhat compensated by the BST, hence in this case, the decrease in the EE is minimal.

The convergence behavior of the proposed framework is shown in Fig. \ref{f9}. When the number of the cells in the system are increased, the number of optimization variables increases which results in increasing the complexity of the system. The Fig. \ref{f9} shows that the system with just one cells takes the least number of iterations to converge, whereas the system with 10 cells is the slowest to converge. However, it can be seen that for any number of cells in the system, the proposed framework converges within limited iterations.

\section{Conclusion}
BSC and NOMA are the two emerging technologies to connect large-scale low-powered IoT devices in coming 6G era. In this paper, we has proposed the EE maximization approach for multi-cell NOMA BSC under the assumption of imperfect SIC. In particular, the power of source, PAC of IoT devices and RC of BST in each each have been jointly optimized to maximize total EE of the network. The Dinkelbach's algorithm has adopted first to transform the optimization followed by KKT conditions and dual method to obtain the efficient solutions. The simulation results has shown that the proposed multi-cell NOMA BSC framework outperforms the benchmark optimization framework and converges in a few iterations.


\section*{Appendix A: Proof of Proposition 1}
Here, concavity/convexity of $R_k$ w.r.t. $\varPhi_{f,k}$ is proved. The first derivative of $R_k$ w.r.t. $\varPhi_{f,k}$ is given as:
\begin{align}
& \frac {\partial R_k}{\partial \varPhi_{f,k}} = \frac {Y_{i,k}}{\ln(2)(A_{i,k}+Z_{i,k})}+\frac {C_{j,k}}{\ln(2)({B_{j,k}}^2+B_{j,k}A_{j,k})} \label{28}
\end{align}
where $A_{i,k}=(X_{i,k}+\varPhi_{f,k}Y_{i,k})$, $A_{j,k}=(X_{j,k}+\varPhi_{f,k}Y_{j,k})$, $B_{j,k}=(Z_{j,k}+\varPhi_{f,k}W_{j,k})$, and $C_{j,k}=({Y_{j,k}Z_{j,k}-X_{j,k}W_{j,k}})$. 
It's second order derivative is as: 
\begin{align}
& \frac {\partial^2R_k}{\partial {\varPhi_{f,k}}^2} =- \Biggl( \frac {{Y_{i,k}}^2}{\ln(2)(A_{i,k}+Z_{i,k})^2} \nonumber\\
&+\frac{C_{j,k}(2W_{j,k}E_{j,k}+C^+_{j,k})}{\ln(2)B_{j,k}^2({B_{j,k}}+B_{j,k}A_{j,k})^2}\Biggl) < 0 \label{29}
\end{align} 
where $E_{j,k}=B_{j,k}+Y_{j,k}\varPhi_{f,k}$, and $C^+_{j,k}=({Y_{j,k}Z_{j,k}+X_{j,k}W_{j,k}})$.\\
We can see that (39) is negative, therefore $R_k$ is concave/convex and is an increasing function with $\varPhi_{f,k}$.

\section*{Appendix B: Proof of Proposition 2}
We employ the dual method to obtain an efficient closed-form solution for convex optimization problem in (\ref{13}) with respect to RC of BST. The Lagrangian function of problem (\ref{13}) can be defined as:
\begin{alignat}{2}
& L(\varPhi_{f,k},\lambda_{i,k},\lambda_{j,k},\mu_k,\eta_{f,k})= \sum\limits_{k=1}^K N_k(\varPhi_{f,k})\nonumber\\ 
&-\varPi\sum\limits_{k=1}^KP_k(\varLambda_{i,k}+\varLambda_{j,k})-p_c+\lambda_{i,k}Q(\varPhi_{f,k},i,k)\nonumber\\
&+\lambda_{j,k}Q(\varPhi_{f,k},j,k)+\mu_k(P_{max}-P_k)+ \eta_{f,k}R(\varPhi_{f,k})\label{30}
\end{alignat}
where
\begin{alignat}{2}
&N_k(\varPhi_{f,k})= \log_2\bigg\{\bigg(1+\frac{X_{i,k}+\varPhi_{f,k}Y_{i,k}}{Z_{i,k}}\bigg)\nonumber\\
& +\log_2\bigg(1+\frac{X_{j,k}+\varPhi_{f,k}Y_{j,k}}{Z_{j,k}+\varPhi_{f,k}W_{j,k}}\bigg)\bigg\} \label{31}
\end{alignat}
\begin{alignat}{2}
&Q(\varPhi_{f,k},i,k)={X_{i,k}+\varPhi_{f,k}Y_{i,k}}-(2^{R_{min}}-1){Z_{i,k}} \label{32}
\end{alignat}
\begin{alignat}{2}
&Q(\varPhi_{f,k},j,k)={X_{j,k}+\varPhi_{f,k}Y_{j,k}}-(2^{R_{min}}-1)\nonumber\\ &\times({Z_{j,k}+\varPhi_{f,k}W_{j,k}}) \label{33}\\
& R(\varPhi_{f,k})={\varPhi_{f,k}-1} \label{34}
\end{alignat}
In (\ref{30}) $\lambda_{i,k}$, $\lambda_{j,k}$, $\mu_k$ and $\eta_{f,k}$ are called the Lagrangian multipliers. Next, we exploit the KKT conditions such as:
\begin{align}
\frac{\partial L(\varPhi_{f,k},\lambda_{i,k},\lambda_{j,k},\mu_k,\eta_{f,k})}{\partial \varPhi_{f,k}}|_{\varPhi=\varPhi^*}=0, \label{35}
\end{align}
The above equation results in
\begin{align}
& \frac {Y_{i,k}}{\ln(2)(A_{i,k}+Z_{i,k})}+\frac {C_{j,k}}{\ln(2)({B_{j,k}}^2+B_{j,k}A_{j,k})}+\lambda_{i,k}Y_{i,k}\nonumber\\
&+\lambda_{j,k}(Y_{j,k}-(2^{R_{min}}-1)W_{j,k})+\eta_{f,k} = 0 \label{36}
\end{align}
\begin{align}
& \frac {Y_{i,k}}{\ln(2)(A_{i,k}+Z_{i,k})}+\frac {C_{j,k}}{\ln(2)({B_{j,k}}^2+B_{j,k}A_{j,k})}+\eta_{f,k}\nonumber\\
&=(\lambda_{j,k}(2^{R_{min}}-1)W_{j,k} -\lambda_{j,k}Y_{j,k})-\lambda_{i,k}Y_{i,k} \label{37}
\end{align}
In (\ref{36}), $C_{j,k}= Y_{j,k}Z_{j,k}-X_{j,k}W_{j,k} = \varDelta^k_{j,k'}+\sigma^2 > 0$. The left hand side of (\ref{36}) is always positive and therefore
\begin{align}
(\lambda_{j,k}(2^{R_{min}}-1)W_{j,k} -\lambda_{j,k}Y_{j,k}) > \lambda_{i,k}Y_{i,k} \label{38}
\end{align}
In (\ref{38}), $(\lambda_{j,k}(2^{R_{min}}-1)W_{j,k} -\lambda_{j,k}Y_{j,k})$ is always positive because $(2^{R_{min}}-1)$ is always positive and $W_{j,k} > Y_{j,k}$. Since $\lambda_{i,k} \ge 0$, the $\lambda_{j,k}$ is nonnegative. The slack complimentary condition in KKT conditions is satisfied. Therefore, $Q(\varPhi_{f,k},i,k)$ and $Q(\varPhi_{f,k},j,k)$ corresponding to $\lambda_{i,k}$ and $\lambda_{j,k}$ are active. Hence, $Q(\varPhi_{f,k},i,k)=0$ and $Q(\varPhi_{f,k},j,k)=0$. Finally, the optimum $\varPhi_{f,k}$ is obtained from active inequality constraint as given in (\ref{15}).

\section*{Appendix C: Proof Of Proposition 3}
Here, the concavity/convexity of $\varLambda_{i,k}$ and $\varLambda_{j,k}$ is proved. The Hessian matrix should be negative definite, if a function is concave. The Hessian matrix is negative definite, when its principal minors have alternative signs. Here we derive a Hessian matrix for our formulated problem and demonstrate it as negative definite. The sum-rate of $\mathcal S_k$ can be written as:
\begin{align}
&R_k=\log_2(1+\gamma^k_{i\rightarrow i})+\log_2(1+\gamma^k_{j\rightarrow j}) \label{39}\\
R_k & =\log_2(1+\frac{P_k\varLambda_{i,k}A_{i,k}}{P_k\varLambda_{j,k}{B_{i,k}}+C_{i,k}})\nonumber\\
&+\log_2(1+\frac{P_k\varLambda_{j,k}A_{j,k}}{P_k\varLambda_{i,k}B_{j,k}+C_{j,k}}) \label{40}
\end{align}
The Hessian matrix of (\ref{40}) is defined as:
\begin{align}
H=
\begin{bmatrix}
\frac {\partial R_k}{\partial^2\varLambda_{i,k}} & \frac {\partial R_k}{\partial\varLambda_{i,k}\partial\varLambda_{j,k}} \\
\frac {\partial R_k}{\partial\varLambda_{j,k}\partial\varLambda_{i,k}} & \frac {\partial R_k}{\partial^2\varLambda_{i,j}} \label{41}
\end{bmatrix}
\end{align}
\begin{align}
&\frac {\partial R_k}{\partial^2\varLambda_{i,k}}\nonumber=\varphi_{1,1}= - \\
&\frac {A_{i,k}^2V_{j,k}^2T_{j,k}^2-A_{j,k}B_{j,k}^2T_{i,k}^2(2V_{j,k}+A_{j,k}\varLambda_{j,k})\varLambda_{j,k}}{\ln(2)T_{i,k}^2T_{j,k}^2V_{j,k}^2} \label{42}
\end{align}
\begin{align}
&\frac {\partial R_k}{\partial\varLambda_{i,k}\partial\varLambda_{j,k}}=\varphi_{1,2} \nonumber\\
&= -\frac {A_{i,k}B_{i,k}T_{j,k}^2-A_{j,k}B_{j,k}T_{i,k}^2}{\ln(2)T_{i,k}^2T_{j,k}^2} \label{43}\\
&\frac {\partial R_k}{\partial^2\varLambda_{j,k}} =\varphi_{2,1} \nonumber\\ 
&= -\frac {A_{j,k}^2V_{i,k}^2T_{i,k}^2-A_{i,k}B_{i,k}^2T_{j,k}^2(2V_{i,k}+A_{i,k}\varLambda_{i,k})\varLambda_{i,k}}{\ln(2)T_{j,k}^2T_{i,k}^2V_{i,k}^2} \label{44}\\
&\frac {\partial R_k}{\partial\varLambda_{j,k}\partial\varLambda_{i,k}}=\varphi_{2,2} \nonumber\\
&= -\frac {A_{i,k}B_{i,k}T_{j,k}^2-A_{j,k}B_{j,k}T_{i,k}^2}{\ln(2)T_{i,k}^2T_{j,k}^2} \label{45}
\end{align}
where, $T_{i,k}= A_{i,k}\varLambda_{i,k}+V_{i,k}$, $T_{j,k}= A_{j,k}\varLambda_{j,k}+V_{j,k}$, $V_{i,k}= B_{i,k}\varLambda_{j,k}+C_{i,k}$, and $V_{j,k}= B_{j,k}\varLambda_{i,k}+C_{j,k}$. The obtained Hessian matrix can be expressed as:
\begin{align}
H=
\begin{bmatrix}
\varphi_{1,1} & \varphi_{1,2} \\
\varphi_{2,1} & \varphi_{2,2} \label{46}
\end{bmatrix}
\end{align}
We can see that $\varphi_{1,1}$ and $\varphi_{2,2}$ in (\ref{56}) are the first order principle minors and also negative. Moreover, it can be evident that the second order minors are the determinant of (\ref{56}) and can be written as
\begin{align}
det H = \varphi_{1,1}\varphi_{2,2} - \varphi_{1,2}\varphi_{2,1} >0 \label{47}
\end{align}

\section*{Appendix D: Proof of Proposition 4}
The derivative of Equation 19 with respect to $\varLambda_{i,k}$ is 
\begin{align}
&\frac{\partial L(\boldsymbol{\varLambda_k},\boldsymbol{\lambda_k},\mu_k,\epsilon_k)}{\partial \varLambda_{i,k}}= \frac {A_{i,k}}{\ln(2)(A_{i,k}\varLambda_{i,k}+B_{i,k}\varLambda_{j,k}+C_{i,k})}\nonumber\\
&+\frac {A_{j,k}B_{j,k}\varLambda_{j,k}}{\ln(2)(B_{j,k}\varLambda_{i,k}+C_{j,k})(A_{j,k-}\varLambda_{j,k}+B_{j,k}\varLambda_{i,k}+C_{j,k})}\nonumber\\
&-D \label{48}
\end{align}
where $D =\varPi P_k-\lambda_{i,k}A_{i,k}+\lambda_{j,k}(2^{R_{min}}-1)B_{j,k}+\epsilon_k $. Put $\varLambda_{j,k}=1-\varLambda_{i,k}$ in Equation (34) which results in
\begin{align}
&\frac {A_{i,k}}{\ln(2)(X_{i,k}\varLambda_{i,k}+W_{i,k})}-\frac {\gamma^k_{j\rightarrow j}B_{j,k}}{\ln(2)(Y_{j,k}\varLambda_{i,k}+W_{j,k})}-D \label{49}
\end{align}
where $X_{i,k}=A_{i,k}-B_{i,k}$, $Y_{j,k}=B_{j,k}-A_{j,k}$, $W_{i,k}=B_{i,k}+C_{i,k}$ and $W_{j,k}=A_{j,k}+C_{j,k}$.\\ 
After some manipulation, Equation (35) results as:
\begin{align}
& A_{i,k}(Y_{j,k}\varLambda_{i,k}+W_{j,k})-{\gamma^k_{j\rightarrow j}B_{j,k}}(X_{i,k}\varLambda_{i,k}+W_{i,k})\nonumber\\
&-\ln(2)D(Y_{j,k}\varLambda_{i,k}+W_{j,k})(X_{i,k}\varLambda_{i,k}+W_{i,k})=0 \label{50}
\end{align}
After expanding and writing in $ax^2+bx+c$
\begin{align}
&(-\ln(2)DX_{i,k}Y_{j,k})\varLambda_{i,k}^2+(A_{i,k}Y_{j,k}-{\gamma^k_{j\rightarrow j}B_{j,k}}X_{i,k}\nonumber\\
&-\ln(2)DX_{i,k}W_{j,k}-\ln(2)DY_{j,k}W_{i,k})\varLambda_{i,k}+(A_{i,k}W_{j,k}\nonumber\\
&-{\gamma^k_{j\rightarrow j}B_{j,k}}W_{i,k}-\ln(2)DW_{i,k}W_{j,k}) \label{51}
\end{align}
The solution of above problem is as follow,
\begin{align}
\varLambda_{i,k}=\bigg[\frac {-b \pm \sqrt{b^2-4ac}}{2a}\bigg]^+ \label{52}
\end{align}
The proof is completed.

\bibliographystyle{IEEEtran}
\bibliography{Wali_Ref}

\end{document}